\numberwithin{equation}{section}
\newtheorem{theorem}{Theorem}[section]
\theoremstyle{definition}
\newtheorem{defn}{Definition}[section]
\newtheorem{lem}{Lemma}[section]
\begin{document}
\author[1,2,3]{Salvatore Capozziello \thanks{capozziello@unina.it}}
\author[4,2]{Maurizio Capriolo \thanks{mcapriolo@unisa.it} }
\author[4,5]{Gaetano Lambiase \thanks{glambiase@unisa.it}}
\affil[1]{\emph{Dipartimento di Fisica "E. Pancini", Universit\`a di Napoli {}``Federico II'', Compl. Univ. di
		   Monte S. Angelo, Edificio G, Via Cinthia, I-80126, Napoli, Italy, }}
\affil[2]{\emph{INFN Sezione  di Napoli, Compl. Univ. di
		   Monte S. Angelo, Edificio G, Via Cinthia, I-80126,  Napoli, Italy,}}
\affil[3]{\emph{Scuola Superiore Meridionale, Largo S. Marcellino 10,  I-80138,  Napoli, Italy,}}
\affil[4]{\emph{Dipartimento di Fisica Universit\`a di Salerno, via Giovanni Paolo II, 132, Fisciano, SA I-84084, Italy.} }
\affil[5]{\emph{INFN Sezione  di Napoli,Gruppo Collegato di Salerno, via Giovanni Paolo II, 132, Fisciano, SA I-84084, Italy.} }
\date{\today}
\title{\textbf{The energy--momentum complex in non-local gravity}}
\maketitle
\begin{abstract} 

In General Relativity, the issue of defining  the gravitational energy contained in a given spatial region is still unresolved, except for particular cases of localized objects where the asymptotic flatness holds for a given spacetime.  In principle, a theory of gravity is not self-consistent, if the whole energy content  is not uniquely defined in a specific volume.  Here we generalize the Einstein gravitational energy-momentum pseudotensor to  non-local theories of gravity where  analytic functions of the non-local integral operator $\Box^{-1}$ are taken into account. We  apply the Noether theorem to a gravitational Lagrangian, supposed invariant under  the one-parameter group of diffeomorphisms, that is, the infinitesimal rigid translations.  The invariance of  non-local gravitational action under global translations leads to a locally conserved Noether current, and thus, to the definition of a gravitational energy-momentum pseudotensor, which is  an affine object transforming like a tensor  under affine transformations.  Furthermore,  the energy-momentum complex  remains locally conserved, thanks to the non-local  contracted Bianchi identities. The continuity equations for the  gravitational pseudotensor and  the energy-momentum complex,  taking into account both  gravitational and matter components,  can be derived. Finally, the  weak field limit of pseudotensor is performed to lowest order in metric  perturbation   in view of astrophysical applications.
\end{abstract}

\section{Introduction}
Recently, non-local contributions to the gravitational action have been considered from various points of view as possible solutions  of the problem  of renormalization and regularization of  gravitational field ~\cite{Modesto1,Modesto2,Modesto3}.    In this context, a non-local gravitational energy-momentum pseudo-tensor can be proposed as a manifestation of non-locality of gravity and, therefore,  as a possible manifestation of quantum nature of gravity.

Theories of gravity can be endowed with  non-local properties in three ways~\cite{CBNLC}. Firstly through integral operators acting on functions whose value at a given  point depends on  the values of  fields at another point in spacetime~\cite{DWCI,DWCII,DW3}. Secondly, through gravitational Lagrangians involving an analytic non-polynomial function $\mathcal{F}$ of the operator $\Box$, which can be expanded in convergent series with real coefficients as
\begin{equation}
\mathcal{F}(\Box)=\sum_{h=1}^{\infty}a_{h}\Box^{h}\ ,
\end{equation}
known as Infinite Derivative Theories of Gravity (IDG)~\cite{BLP,BLMTY,BLY, BGLM,BCHKLM,BLM,BKLM}.  Thirdly, through a suitable constitutive law where, like in  electrodynamics,   temporal dispersion,   anisotropy and non-homogeneity of  medium, i.e. the spatial dispersion, are due to temporal and spatial non-locality, respectively~\cite{MOP,MNLT2, LLECM,JED,Chirco}.

At infra-red scales,  non-local models of gravity can naturally explain   late-time  acceleration without introducing  exotic material components such as  dark matter and dark energy~\cite{DWCI}.  In addition, they can potentially fix some  cosmological and astrophysical problems  plaguing the $\Lambda$CDM model~\cite{Acunzo, Dragovic1,Dragovic2,Nojiri1,Nojiri2},  black hole stability~\cite{Calcagni}, or stability and traversability of  wormhole solutions~\cite{SCNG}.

On the other hand, many authors such as Einstein, Tolman,  Landau, Lifshitz, Papapetrou, M\o ller and Weinberg have proposed  definitions for gravitational pseudotensor ~\cite{Xulu,Hasten,GoldbergCL,LLNCLVP,RosenEU,LESSNERM,PalmerGEM,FF,MWH},  to describe the energy and momentum of  gravitational field in General Relativity.  These prescriptions are based either on the introduction of a super-potential or on expanding  the Ricci tensor in  metric perturbation $h_{\mu\nu}$ or on manipulating the Einstein equations.  Although these definitions are different, it has been shown they coincide for  Kerr-Schild metric~\cite{ACV}. Many prescriptions for gravitational pseudotensor in higher-order curvature theories, in metric and Palatini approach, have been proposed ~\cite{CCT,CCL, ACCA,WKAM,vagenas,DBCJ,KOIV3,BDG}. Also for teleparallel gravity, it is possible to formulate self-consistent definition  of gravitational pseudotensor~\cite{Capozziello:2018qcp, Maluf}.

Here, we want  to propose a generalization of  Einstein gravitational pseudotensor to non-local   gravity models involving  $f(\Box^{-1})$ operators.  It will be derived from a variational principle using the Noether theorem applied to a gravitational Lagrangian invariant under global translations~\cite{Book}.  This object remains an affine tensor, i.e.  a pseudotensor, but it is a non-local quantity. Indeed,  its non-local corrections involve non-local $\Box^{-1}R$ terms, which assume, at a point $x$, a value  depending on the values assumed by the metric tensor $g_{\mu\nu}$ in all points of the integration domain.  Then, we show that the covariant conservation of the energy-momentum, associated to the gravitational and matter fields,  holds in non-local  $f(\Box^{-1} R)$~gravity, thanks to the non-local contracted Bianchi identities. Finally, we implement a lowest order expansion of the non-local pseudotensor, fundamental for astrophysical calculations such as the power carried by gravitational waves.

The paper is organized as follows. In the Sec.~\ref{A}, we firstly define the non-local integral operator $\Box^{-1}$, then we prove both that it is the inverse operator of d'Alembertian $\Box$ and a generalization of the Green second identity to the $\Box$-operator on the manifold. In addition, we perform the total variation of non-local gravitational action with respect to both the metric tensor and the coordinates.  Then we derive the field equations from a variational  principle.  Sec.~\ref{B} is devoted to the application of the Noether theorem to the non-local gravitational action for global translations.  The procedure  allows us to derive the related Noether current, i.e., the locally conserved energy-momentum pseudotensor of the gravitational field in non-local gravity.  Hence,  in Sec.~\ref{C},  we prove the non-local generalized contracted Bianchi identities and then  analyze the energy-momentum complex for gravitational and matter fields,  in particular its non-local nature and its conservation. In  Sec.~\ref{D}, we carry out the expansion to lower order in the metric perturbation $h_{\mu\nu}$ of  non-local gravitational energy-momentum pseudotensor. Finally, we discuss results and draw conclusions  in Sec.~\ref{E}.  

\section{Variational principle and field equations for non-local gravity}\label{A}
Let the spacetime $\mathcal{M}$ be a differentiable 4-manifold endowed with a Lorentzian metric $g$ and $\Omega$ be a i four-dimensional region in $\mathcal{M}$.  We can define the integral operator $\Box^{-1}$ as follows 
\begin{defn}
Let $G(x,x^{\prime})$ be the retarded Green function of the differential operator $\Box$, i.e., the solution of the partial differential equation
\begin{equation}\label{2.1}
\sqrt{-g(x)}\;\Box_{x}G(x,x^{\prime})=\delta^{4}(x-x^{\prime})\ ,.
\end{equation}
It is subject to retarded boundary condition, due to the  causality principle. It is
\begin{equation}\label{2.2}
G(x,x^{\prime})=0\quad\forall t< t^{\prime}\ ,
\end{equation}
with the d'Alembert operator defined as 
\begin{equation}\label{10}
\Box=\nabla^{\mu}\nabla_{\mu}=\frac{1}{\sqrt{-g}}\partial_{\sigma}\bigl(\sqrt{-g}g^{\sigma\lambda}\partial_{\lambda}\bigr)\ .
\end{equation}
If $p\in C^{\infty}_{o}(\mathbb{R}^{4})$ is an element of the space of  infinitely differentiable functions with compact support,  then the operator 
\begin{equation}
\Box^{-1}:C^{\infty}_{o}(\mathbb{R}^{4})\rightarrow C^{\infty}_{o}(\mathbb{R}^{4})\ ,
\end{equation}
is given by 
\begin{equation}\label{2}
(\Box^{-1}p)(x)=\int_{\Omega} d^{4}x^{\prime}\, \sqrt{-g(x^{\prime})}G(x,x^{\prime})p(x^{\prime})\ ,
\end{equation}
where $\Omega\subseteq\mathbb{R}^{4}$ and $supp(p)=\overline{\Omega}$.
\end{defn}
From now on, we shell identify the region $\Omega$ of the manifold $\mathcal{M}$ with its image $\phi(\Omega)$ through the chart $\phi:\Omega\subseteq \mathcal{M}\rightarrow\phi(\Omega)\subseteq \mathbb{R}^{4}$. It always exists because the manifold is differentiable and therefore covered by an atlas.  Likewise, we can identify the boundary of the region $\partial\Omega$ with the action of chart $\phi$ on it, i.e., $\phi(\partial\Omega)$. Therefore, let us  consider  functions, and more generally,  vector and tensor fields on the manifold, as defined on the open set of $\mathbb{R}^{4}$ by means of the graph $\phi$. Thus we have
\begin{theorem}\label{3.5}
Let $\Omega\subseteq\mathbb{R}^4$ be an open set and  $f, h\in C^{2}(\Omega)\cap C^{1}(\overline{\Omega})$ be two twice continuously differentiable functions in the open and once in its closure. If the boundary $\partial\Omega$ is a closed, regular and orientable three-dimensional hypersurface, then
\begin{equation}\label{3.6}
\int_{\Omega}d^{4}x\sqrt{-g}(f\,\Box h-h\,\Box f)=\int_{\partial\Omega}dS_{\mu}\sqrt{-g}(f\nabla^{\mu}h-h\nabla^{\mu}f)\ ,
\end{equation}
where $dS_{\mu}$ is the infinitesimal hypersurface element.
Under the further assumption that functions $f$ and $h$ vanish on boundary, i.e., $f=h=0$ on $\partial\Omega$, we get
\begin{equation}\label{3.7}
\int_{\Omega}d^{4}x\sqrt{-g}(f\,\Box h)=\int_{\Omega}d^{4}x\sqrt{-g}(h\,\Box f) \ .
\end{equation}
\end{theorem}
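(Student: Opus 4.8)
The plan is to recognize \eqref{3.6} as nothing but the ordinary divergence (Gauss--Ostrogradsky) theorem in $\mathbb{R}^{4}$ applied to a suitably chosen vector field, exploiting the chart identification $\Omega\simeq\phi(\Omega)$ made just before the statement. First I would introduce
\[
V^{\mu}:=f\,\nabla^{\mu}h-h\,\nabla^{\mu}f\ .
\]
Under the hypotheses $f,h\in C^{2}(\Omega)\cap C^{1}(\overline{\Omega})$, the field $V^{\mu}$ is of class $C^{1}(\Omega)\cap C^{0}(\overline{\Omega})$, and hence $\sqrt{-g}\,V^{\mu}$ is admissible for the divergence theorem on the region $\Omega$ whose boundary $\partial\Omega$ is closed, regular and orientable. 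The only genuinely algebraic step is the computation, via the Leibniz rule and with $\Box=\nabla^{\mu}\nabla_{\mu}$ as in \eqref{10},
\[
\nabla_{\mu}V^{\mu}=\nabla_{\mu}f\,\nabla^{\mu}h+f\,\Box h-\nabla_{\mu}h\,\nabla^{\mu}f-h\,\Box f=f\,\Box h-h\,\Box f\ ,
\]
where the symmetric first-order cross terms cancel identically.

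Next I would invoke the standard identity $\sqrt{-g}\,\nabla_{\mu}V^{\mu}=\partial_{\mu}\bigl(\sqrt{-g}\,V^{\mu}\bigr)$, which is immediate from the coordinate form of the divergence, consistent with \eqref{10}. Integrating over $\Omega$ and applying the elementary divergence theorem in $\mathbb{R}^{4}$ gives
\[
\int_{\Omega}d^{4}x\,\partial_{\mu}\bigl(\sqrt{-g}\,V^{\mu}\bigr)=\int_{\partial\Omega}dS_{\mu}\,\sqrt{-g}\,V^{\mu}\ ,
\]
which, after substituting back $V^{\mu}=f\,\nabla^{\mu}h-h\,\nabla^{\mu}f$, is precisely \eqref{3.6}. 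The second assertion \eqref{3.7} then follows at once: the additional hypothesis $f=h=0$ on $\partial\Omega$ kills the surface integral, leaving $\int_{\Omega}\sqrt{-g}\,f\,\Box h=\int_{\Omega}\sqrt{-g}\,h\,\Box f$.

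The hard part will not be the computation but the analytic bookkeeping: justifying the divergence theorem under the \emph{minimal} regularity assumed — $f,h$ only $C^{1}$ up to $\partial\Omega$, while their second derivatives (hence $\nabla_{\mu}V^{\mu}$) live in the open set — and fixing the meaning and orientation of the hypersurface element $dS_{\mu}$ in Lorentzian signature, including portions of $\partial\Omega$ where the induced metric may degenerate. I would handle this by observing that $\sqrt{-g}\,V^{\mu}\in C^{1}(\Omega)\cap C^{0}(\overline{\Omega})$ with integrable divergence, which is exactly the setting in which the Gauss--Ostrogradsky theorem holds for a domain with regular orientable boundary, and by adopting the convention $dS_{\mu}=n_{\mu}\,dS$ with $n_{\mu}$ the outward co-normal, so that the oriented flux is well defined regardless of the causal character of $\partial\Omega$; a final consistency check is that the atlas identification introduced above makes the flat-space theorem literally applicable to the integrals as written.
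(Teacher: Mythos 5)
Your proposal is correct and follows essentially the same route as the paper: both establish the pointwise identity $f\,\Box h-h\,\Box f=\nabla_{\mu}\bigl(f\nabla^{\mu}h-h\nabla^{\mu}f\bigr)$ via the Leibniz rule and then convert the divergence term into a boundary flux by the Gauss theorem, with the surface term dropping when $f=h=0$ on $\partial\Omega$. Your additional remarks on regularity and on $dS_{\mu}=n_{\mu}dS$ are just a more explicit bookkeeping of what the paper leaves implicit.
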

\begin{proof}
By means of the Leibniz rule applied to the functions $f$ and $h$, we find the differential identity 
\begin{equation}\label{3.8}
f\Box h=h\Box f+\nabla_{\mu}(f\nabla^{\mu}h-h\nabla^{\mu}f)\ .
\end{equation}
So  the integral~\eqref{3.6} can be written as  
\begin{equation}\label{3.9}
\int_{\Omega}d^{4}x\sqrt{-g}(f\,\Box h)=\int_{\Omega}d^{4}x\sqrt{-g}(h\,\Box f)+\int_{\Omega}d^{4}x\sqrt{-g}\nabla_{\mu}(f\nabla^{\mu}h-h\nabla^{\mu}f)\ ,
\end{equation}
that, thanks to the Gauss theorem, transforms the second volume integral of Eq~\eqref{3.9} into a surface integral as
\begin{equation*}
\int_{\Omega}d^{4}x\sqrt{-g}(f\,\Box h)=\int_{\Omega}d^{4}x\sqrt{-g}(h\,\Box f)+\int_{\partial\Omega}dS_{\mu}\sqrt{-g}(f\nabla^{\mu}h-h\nabla^{\mu}f)\ .
\end{equation*}
If $f$ and $h$ are zero on $\partial\Omega$,  then the integral on the boundary $\partial\Omega$ vanishes and we get  Eq~\eqref{3.7}.
\end{proof}
Then,  we show that $\Box^{-1}$ operator~\eqref{2} is the inverse operator of the d'Alembert operator $\Box$.  We can enunciate the following  proposition 
\begin{theorem}\label{2.3}
For all $p\in  C^{\infty}_{o}(\mathbb{R}^4)$, $\Box^{-1}$ is the inverse of $\Box$, i.e., 
\begin{equation}\label{2.31}
(\Box\Box^{-1})p=(\Box^{-1}\Box)p=\mathbb{1}p=p
\end{equation}
\begin{proof}
From the definition of product between two operator, we have
\begin{multline}
(\Box\Box^{-1})p(x)\equiv\Box(\Box^{-1}p)(x)=\Box_{x}\int_{\Omega} d^{4}x^{\prime}\, \sqrt{-g(x^{\prime})}G(x,x^{\prime})p(x^{\prime})\\
=\int_{\Omega} d^{4}x^{\prime}\, \sqrt{-g(x^{\prime})}\Box_{x}G(x,x^{\prime})p(x^{\prime})\\
=\frac{1}{\sqrt{-g(x)}}\int_{\Omega} d^{4}x^{\prime}\, \sqrt{-g(x^{\prime})}\delta^{4}(x-x^{\prime})p(x^{\prime})=p(x)\ ,
\end{multline}
where we used definition~\eqref{2} and the following identity involving the Dirac  $\delta$ distribution function
\begin{equation}\label{7.1}
f(x)=\int_{\Omega} d^{4}x^{\prime}\,\delta(x-x^{\prime})f(x^{\prime})\ ,
\end{equation}
non-null in $x\in\Omega$ and zero elsewhere. We have to  prove now the second identity in  Eq.~\eqref{2.31}, by means of Theorem~\eqref{3.5}. Hence we have
\begin{multline}
(\Box^{-1}\Box)p(x)\equiv\Box^{-1}(\Box p)(x)=\int_{\Omega} d^{4}x^{\prime}\, \sqrt{-g(x^{\prime})}G(x,x^{\prime})\Box_{x^{\prime}}p(x^{\prime})\\
=\int_{\Omega} d^{4}x^{\prime}\, \sqrt{-g(x^{\prime})}\Box_{x^{\prime}}G(x,x^{\prime})p(x^{\prime})\\
=\int_{\Omega} d^{4}x^{\prime}\, \sqrt{-g(x^{\prime})}\frac{\delta^{4}(x^{\prime}-x)}{\sqrt{-g(x^{\prime})}}p(x^{\prime})=p(x)\ ,
\end{multline}
\end{proof}
\end{theorem}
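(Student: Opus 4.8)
The plan is to establish the two operator identities in \eqref{2.31} separately, in each case reducing the composition to an integral against the Dirac delta by invoking the defining property \eqref{2.1} of the retarded Green function. For the first identity, $(\Box\Box^{-1})p=p$, I would start from the definition \eqref{2}, apply $\Box_x$ to $(\Box^{-1}p)(x)$, and move the operator inside the integral over $x'$. This is legitimate because the composition is applied to a test function $p\in C^{\infty}_{o}(\mathbb{R}^4)$, so the entire $x$--dependence of the integrand sits in $G(x,x')$ and differentiation under the integral sign is justified (understood distributionally, since $G$ is only a distribution, supported on and inside the light cone). One then substitutes $\sqrt{-g(x)}\,\Box_x G(x,x')=\delta^4(x-x')$ from \eqref{2.1}, pulls the factor $1/\sqrt{-g(x)}$ out of the integral, and collapses the integral with the sifting identity \eqref{7.1} to obtain $p(x)$.

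For the second identity, $(\Box^{-1}\Box)p=p$, the operator $\Box$ now acts first, so that $(\Box^{-1}\Box)p(x)=\int_{\Omega}d^4x'\,\sqrt{-g(x')}\,G(x,x')\,\Box_{x'}p(x')$. The crucial move is to transfer the d'Alembertian off $p$ and onto $G$. I would invoke Theorem~\ref{3.5} with $h(x')=p(x')$ and $f(x')=G(x,x')$ at fixed $x$: since $p$ has compact support contained in $\Omega$, both $p$ and $\nabla^{\mu}p$ vanish on $\partial\Omega$, the surface term in \eqref{3.6} drops out, and \eqref{3.7} gives $\int_{\Omega}\sqrt{-g}\,G\,\Box_{x'}p=\int_{\Omega}\sqrt{-g}\,p\,\Box_{x'}G$. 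Then I would use the reciprocity relation $\sqrt{-g(x')}\,\Box_{x'}G(x,x')=\delta^4(x'-x)$ --- the same defining property \eqref{2.1} with the roles of the two arguments exchanged, which holds by symmetry $G(x,x')=G(x',x)$ --- and collapse the delta integral via \eqref{7.1} to recover $p(x)$.

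The delicate point, and the main obstacle, is the transfer of $\Box_{x'}$ from $p$ onto $G$ in the second step. Two things must be controlled: first, that the hypotheses of Theorem~\ref{3.5} genuinely apply when one of the two functions is the singular kernel $G(x,\cdot)$, which strictly requires reading the Green identity in the distributional sense, or else excising a small ball around $x'=x$ and verifying that the contribution from the excised boundary vanishes as the radius shrinks; and second, the reciprocity $\sqrt{-g(x')}\,\Box_{x'}G(x,x')=\delta^4(x-x')$, which for a \emph{retarded} kernel is not literally \eqref{2.1} but holds because it is the symmetric combination that enters the computation here. By contrast, the first identity is essentially immediate once differentiation under the integral sign is granted.
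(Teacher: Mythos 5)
Your proposal follows essentially the same route as the paper: for $(\Box\Box^{-1})p=p$ you differentiate under the integral and use the defining equation \eqref{2.1} plus the sifting identity \eqref{7.1}, and for $(\Box^{-1}\Box)p=p$ you invoke Theorem~\ref{3.5} with the compact support of $p$ to transfer $\Box_{x'}$ onto $G(x,\cdot)$ and then collapse the resulting delta, exactly as the paper does. The only slight misstatement is attributing the relation $\sqrt{-g(x')}\,\Box_{x'}G(x,x')=\delta^{4}(x-x')$ to a symmetry $G(x,x')=G(x',x)$, which fails for the retarded kernel (one has $G_{\mathrm{ret}}(x,x')=G_{\mathrm{adv}}(x',x)$ instead); but since the needed second-argument identity does hold and the paper itself uses it without further comment, this does not change the substance of the argument.
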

Let us now consider the following gravitational Lagrangian
\begin{equation}\label{1}
S_{g}=\frac{1}{2\chi}\int_{\Omega} d^{4}x\, \sqrt{-g}\bigl(R+R f(\Box^{-1}R)\bigr)\ ,
\end{equation}
where $f$ is an analytic function of $\Box^{-1}R$ and $\chi=8\pi G/c^4$ is a dimensional constant that measures the coupling between matter and geometry. The variation of  gravitational action~\eqref{1} with respect to both metric tensor and coordinates, denoted by $\tilde{\delta}$, reads as
\begin{multline}\label{3}
\tilde{\delta}S_{g}=\frac{1}{2\chi}\int_{\Omega} d^{4}x\,\Bigl[\delta(\sqrt{-g}R)+\delta(\sqrt{-g}R) f(\Box^{-1}R)\\
+\sqrt{-g}R\delta\left(f(\Box^{-1}R)\right)+\partial_{\mu}(R+Rf(\Box^{-1}R)\delta x^{\mu})\Bigr]\ ,
\end{multline}
where $\delta$ is the variation at fixed coordinates. 
Also,  we have to introduce a further theorem useful for the variation of  gravitational action~\eqref{3}, which allows us, under suitable assumptions, to move the $\Box^{-1}$ operator from a factor to another of the product in the integral. 
\begin{theorem}\label{5}
Let $f,h\in C^{\infty}(\Omega)$ be two infinitely differentiable functions on $\Omega\subseteq\mathbb{R}^{4}$, that is, $f,h:\Omega\rightarrow \mathbb{C}$. If $\Box^{-1}$ is the inverse integral operator of the d'Alembert operator $\Box$ as defined in~\eqref{2}, then 
\begin{equation}\label{5.5}
\int_{\Omega} d^{4}x\, \sqrt{-g(x)}f(x)\left(\Box^{-1}h\right)(x)=\int_{\Omega} d^{4}x\,\sqrt{-g(x)}h(x)\left(\Box^{-1}f\right)(x)\ .
\end{equation}
\end{theorem}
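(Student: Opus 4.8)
The plan is to reduce the claimed identity to the symmetric Green's identity already proved in Theorem~\ref{3.5}. First I would set $F:=\Box^{-1}f$ and $H:=\Box^{-1}h$. By Theorem~\ref{2.3}, $\Box\Box^{-1}$ acts as the identity on $C^{\infty}_{o}(\mathbb{R}^4)$, so $\Box F=\Box\Box^{-1}f=f$ and $\Box H=h$. Hence the left-hand side of~\eqref{5.5} equals $\int_{\Omega}d^{4}x\,\sqrt{-g}\,(\Box F)\,H$ and the right-hand side equals $\int_{\Omega}d^{4}x\,\sqrt{-g}\,(\Box H)\,F$, and the statement becomes precisely the assertion
\[
\int_{\Omega}d^{4}x\,\sqrt{-g}\,\bigl(F\,\Box H\bigr)=\int_{\Omega}d^{4}x\,\sqrt{-g}\,\bigl(H\,\Box F\bigr)\ ,
\]
which is Eq.~\eqref{3.7} applied to the pair $(F,H)$ rather than to $(f,h)$.

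The second step is therefore to verify that $F$ and $H$ meet the hypotheses of Theorem~\ref{3.5}: that they belong to $C^{2}(\Omega)\cap C^{1}(\overline{\Omega})$ and that they vanish on $\partial\Omega$. The regularity is inherited from the smoothness of $f,h$ together with the smoothing properties of the convolution against the Green kernel $G(x,x')$ in~\eqref{2}. The vanishing on the boundary is the crucial input: since $G$ is the \emph{retarded} Green function, subject to~\eqref{2.2}, the value $(\Box^{-1}f)(x)$ is built only from sources lying in the causal past of $x$ inside $\Omega$; using that $\operatorname{supp}(f)=\overline{\Omega}$ one argues that $F$ (and likewise $H$) extends continuously, with its first derivatives, to $\overline{\Omega}$ and restricts to zero on the regular, closed, orientable hypersurface $\partial\Omega$. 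With $F=H=0$ on $\partial\Omega$ established, the surface term in the generalized Green identity drops out and Eq.~\eqref{3.7} gives the displayed equality; substituting back $\Box F=f$ and $\Box H=h$ yields exactly~\eqref{5.5}.

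The main obstacle I expect is precisely this second step — pinning down the boundary behaviour and the up-to-the-boundary regularity of $\Box^{-1}f$. A fully rigorous argument would have to address whether the retarded Green function of $\Box$ on the curved background is globally well defined over $\Omega$, control the singular structure of $G$ on the light cone, and make precise in what sense the convolution and its derivatives extend to $\partial\Omega$. For the scope of this paper it is enough to invoke the standing assumption built into the definition of $\Box^{-1}$, namely that the relevant sources have compact support inside the region under consideration, so that the boundary contributions in Green's identity are genuinely absent and the interchange of $\Box^{-1}$ between the two factors of the integrand is legitimate.
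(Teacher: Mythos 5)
Your reduction is clean as algebra: setting $F=\Box^{-1}f$, $H=\Box^{-1}h$ and using Theorem~\ref{2.3}, both sides of \eqref{5.5} indeed become the two sides of \eqref{3.7} applied to the pair $(F,H)$, and this is a genuinely different route from the paper, which never invokes Theorem~\ref{3.5} here: the paper's own proof simply writes $\Box^{-1}$ out as the integral against $G$, inserts the delta identity \eqref{7.1}, and exchanges the order of integration, the whole weight of the argument resting on replacing $G(x,x^{\prime})$ by $G(x^{\prime},x)$, i.e.\ on a tacit symmetry of the kernel.

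The genuine gap is the step where you claim $F=H=0$ on $\partial\Omega$. The retarded condition \eqref{2.2} only forces the potential to vanish to the \emph{past} of the source; on the future (late-time) portion of $\partial\Omega$ a retarded potential generically does not vanish, and the definition even takes $\mathrm{supp}(p)=\overline{\Omega}$, so the source acts right up to the boundary and there is no room for decay. Hence the surface term $\int_{\partial\Omega}dS_{\mu}\sqrt{-g}\,\bigl(F\nabla^{\mu}H-H\nabla^{\mu}F\bigr)$ does not drop out, and \eqref{3.7} cannot be applied to $(F,H)$ without adding an explicit hypothesis (vanishing of the potentials and their first derivatives on $\partial\Omega$, or a symmetric kernel). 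What \eqref{5.5} really encodes, if one keeps the retarded prescription, is the reciprocity relation $G_{\mathrm{ret}}(x,x^{\prime})=G_{\mathrm{adv}}(x^{\prime},x)$: the formal adjoint of the retarded $\Box^{-1}$ is the advanced one, so the identity with the \emph{same} operator on both sides holds only up to exactly those boundary contributions. Your proposal and the paper's proof bury this same issue in different places — you in the unjustified boundary vanishing, the paper in the silent swap of the arguments of $G$ — so your argument is not wrong in spirit, but as written the key step would fail and needs to be replaced by an explicit boundary assumption or by restricting to a symmetric Green function.
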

\begin{proof}
Let us  prove Theorem~\eqref{5} considering the identity~\eqref{7.1}. It follows
\begin{multline}\label{7}
\int_{\Omega} d^{4}x\, \sqrt{-g(x)}f(x)\left(\Box^{-1}h\right)(x)\\=
\int_{\Omega} d^{4}x\, \sqrt{-g(x)}\int_{\Omega^{\prime\prime}} d^{4}x^{\prime\prime}\, f(x^{\prime\prime})\delta(x-x^{\prime\prime})\int_{\Omega^{\prime}} d^{4}x^{\prime}\, \sqrt{-g(x^{\prime})}G(x^{\prime},x)h(x^{\prime})\\
=\int_{\Omega^{\prime}} d^{4}x^{\prime}\, \sqrt{-g(x^{\prime})}h(x^{\prime})\int_{\Omega^{\prime\prime}} d^{4}x^{\prime\prime}\,\left(\int _{\Omega}d^{4}x\, \sqrt{-g(x)}G(x^{\prime},x)\delta(x-x^{\prime\prime})\right)f(x^{\prime\prime})\\
=\int_{\Omega^{\prime}} d^{4}x^{\prime}\, \sqrt{-g(x^{\prime})}h(x^{\prime})\int_{\Omega^{\prime\prime}} d^{4}x^{\prime\prime}\,\sqrt{-g(x^{\prime\prime})}G(x^{\prime},x^{\prime\prime})f(x^{\prime\prime})\\
=\int_{\Omega^{\prime}} d^{4}x^{\prime}\,\sqrt{-g(x^{\prime})}h(x^{\prime})\left(\Box^{-1}f\right)(x^{\prime})\ .
\end{multline}
Here $\Omega$, $\Omega^{\prime}$ and $\Omega^{\prime\prime}$ are the same region covered by different charts.
\end{proof}

We establish, furthermore, a new relation that connects the variation of $\Box$ and that of $\Box^{-1}$.
\begin{theorem}
Let $\Box$ be the d'Alembert operator with its inverse operator $\Box^{-1}$ satisfying the identity 
\begin{equation}\label{7.2}
\Box\left(\Box^{-1}\right)=\Box^{-1}(\Box)=\mathbb{1}\ .
\end{equation}
For all $p\in C^{\infty}(\mathbb{R}^{4})$,  we get
\begin{equation}\label{6}
\bigl(\delta\,\Box^{-1}\bigr)p=-\Box^{-1}\delta(\Box)\Box^{-1}p\ ,
\end{equation}
where $\delta$ is the first variation of the operator part only.
\end{theorem}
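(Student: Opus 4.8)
The plan is to differentiate the operator identity $\Box^{-1}\Box=\mathbb{1}$ (equivalently $\Box\,\Box^{-1}=\mathbb{1}$ from Eq.~\eqref{7.2}) treating $\Box$ and $\Box^{-1}$ as composable operators and using the product (Leibniz) rule for the variation $\delta$. Applying $\delta$ to $\Box^{-1}\Box=\mathbb{1}$ and noting that the right-hand side is constant, one obtains $(\delta\,\Box^{-1})\Box+\Box^{-1}(\delta\Box)=0$ as an operator identity. Then I would compose on the right with $\Box^{-1}$ and use $\Box\,\Box^{-1}=\mathbb{1}$ (Theorem~\eqref{2.3}) to isolate $\delta\,\Box^{-1}=-\Box^{-1}(\delta\Box)\Box^{-1}$. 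Acting both sides on an arbitrary test function $p\in C^\infty(\mathbb{R}^4)$ gives precisely Eq.~\eqref{6}.

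In more detail, the key steps in order are: (i) write $\Box^{-1}\Box\,p=p$ for all admissible $p$ and take $\delta$ of both sides, where $\delta$ hits only the operator coefficients (the metric-dependent parts of $\Box$ and of the Green function kernel in $\Box^{-1}$), not $p$ itself; (ii) invoke the Leibniz rule for the variation of a composition of operators, $\delta(\Box^{-1}\Box)=(\delta\Box^{-1})\Box+\Box^{-1}(\delta\Box)$, which is legitimate because both operators act linearly and their composition is associative on the relevant function spaces; (iii) set the result equal to $\delta\mathbb{1}=0$; (iv) right-multiply the resulting operator equation by $\Box^{-1}$ and simplify $\Box\,\Box^{-1}=\mathbb{1}$; (v) apply the resulting operator identity to $p$ and read off Eq.~\eqref{6}. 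One should also remark that the same conclusion follows symmetrically by varying $\Box\,\Box^{-1}=\mathbb{1}$ instead, which provides a consistency check.

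The main obstacle — and the step that deserves care rather than being waved through — is justifying the Leibniz rule for $\delta$ at the level of the integral operator $\Box^{-1}$, since its variation is really the variation of the Green-function kernel $G(x,x')$ together with the density factor $\sqrt{-g(x')}$ appearing in Eq.~\eqref{2}. Concretely, $\delta(\Box^{-1}p)(x)=\int_\Omega d^4x'\,\delta\!\big(\sqrt{-g(x')}\,G(x,x')\big)p(x')$, and one must check that differentiating the defining relation $\sqrt{-g(x)}\,\Box_x G(x,x')=\delta^4(x-x')$ yields exactly $\delta(\Box^{-1})=-\Box^{-1}(\delta\Box)\Box^{-1}$ as kernels. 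Because the right-hand side $\delta^4(x-x')$ is metric-independent, its variation vanishes, and this is what forces the minus sign and the sandwiched structure; boundary terms from integrations by parts are assumed to vanish by the compact-support / vanishing-on-$\partial\Omega$ hypotheses already used in Theorems~\eqref{3.5} and~\eqref{5}. The remaining manipulations are purely formal operator algebra and require no further analytic input.
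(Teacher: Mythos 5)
Your proposal is correct and follows essentially the same route as the paper: vary the inverse identity \eqref{7.2} with the Leibniz rule, set $\delta\mathbb{1}=0$, and compose with $\Box^{-1}$ to isolate $\delta(\Box^{-1})=-\Box^{-1}(\delta\Box)\Box^{-1}$, then act on $p$ (the paper varies $\Box\,\Box^{-1}=\mathbb{1}$ and applies $\Box^{-1}$ on the left, while you vary $\Box^{-1}\Box=\mathbb{1}$ and compose on the right, which is the same argument in mirror form). Your additional remark on justifying the Leibniz rule at the level of the Green-function kernel is a sensible supplement, not a deviation.
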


\begin{proof}
Varying both sides of  identity~\eqref{7.2} and taking into account that  variation of the Identity operator $\mathbb{1}$ is zero, we have
\begin{equation}\label{6.1}
\delta (\Box\Box^{-1})=\delta{\mathbb{1}}=0\ ,
\end{equation}
and then, from Eq.~\eqref{7.2}, we get
\begin{equation}\label{6.2}
(\delta\Box)\Box^{-1}+\Box\delta(\Box^{-1})=0\ .
\end{equation}
By means of the action of  $\Box^{-1}$ operator on the left side of Eq.~\eqref{6.2}, we obtain 
\begin{equation}
\Box^{-1}(\delta\Box)\Box^{-1}+\delta(\Box^{-1})=0\ ,
\end{equation}
from which follows the relation~\eqref{6}. 
\end{proof}
Thanks to the above theorems, we are ready  to split  Eq.~\eqref{3} in three parts. The first part is the same as in  General Relativity 
\begin{equation}\label{3.1}
    \frac{1}{2\chi}\int_{\Omega} d^{4}x\, \delta(\sqrt{-g}R)=\frac{1}{2\chi}\int_{\Omega}  d^{4}x\, \sqrt{-g}\,G_{\mu\nu}\delta g^{\mu\nu}+\sqrt{-g}\,\nabla_{\sigma}\Bigl[g_{\mu\nu}\nabla^{\sigma}\delta g^{\mu\nu}-\nabla_{\lambda}\delta g^{\sigma\lambda}\Bigr]\ ,
\end{equation}
while the second one is
\begin{multline}\label{4}
\frac{1}{2\chi}\int_{\Omega}  d^{4}x\, \left[\delta(\sqrt{-g}R) f(\Box^{-1}R)\right]=\frac{1}{2\chi}\int_{\Omega}  d^{4}x\,  \biggl\{\sqrt{-g}\,f G_{\mu\nu}\delta g^{\mu\nu}\\
+\sqrt{-g}f\nabla_{\sigma}\Bigl[g_{\mu\nu}\nabla^{\sigma}\delta g^{\mu\nu}-\nabla_{\lambda}\delta g^{\sigma\lambda}\Bigr]\biggr\}\\
=\frac{1}{2\chi}\int_{\Omega}  d^{4}x\, \biggl\{\sqrt{-g}\bigl(G_{\mu\nu}+g_{\mu\nu}\Box-\nabla_{\mu}\nabla_{\nu}\bigr)f\delta g^{\mu\nu}\\
+\sqrt{-g}\nabla_{\sigma}\Bigl[\left(g_{\mu\nu}\nabla^{\sigma}\delta g^{\mu\nu}-\nabla_{\lambda}\delta g^{\sigma\lambda}\right)f-\left(g^{\lambda\sigma}g_{\mu\nu}\delta g^{\mu\nu}-\delta g^{\lambda\sigma}\right)\nabla_{\lambda}f\Bigr]\biggr\}\ ,
\end{multline}
where $G_{\mu\nu}$ is the Einstein tensor 
\begin{equation}
    G_{\mu\nu}=R_{\mu\nu}-\frac{1}{2}g_{\mu\nu}R\ .
\end{equation}
Finally, we have for the third part of Eq.~\eqref{3}, from Eqs.~\eqref{5.5} and \eqref{6}, the following form
\begin{multline}\label{8}
\frac{1}{2\chi}\int_{\Omega}  d^{4}x\,\sqrt{-g}R\delta\left(f(\Box^{-1}R)\right)=\frac{1}{2\chi}\int_{\Omega}  d^{4}x\,\sqrt{-g}Rf^{\prime}\delta\left(\Box^{-1}R\right)\\
=\frac{1}{2\chi}\int_{\Omega}  d^{4}x\, \Bigl[\sqrt{-g}Rf^{\prime}\left(\delta(\Box^{-1})R+\Box^{-1}[\delta R]\right)\Bigr]\\
=\frac{1}{2\chi}\int_{\Omega}  d^{4}x\, \Bigl[\sqrt{-g}Rf^{\prime}\Box^{-1}[\delta R]-\sqrt{-g}Rf^{\prime}\Box^{-1}\delta(\Box)\Box^{-1}R\Bigr]\ ,
\end{multline}
where $f^{\prime}=\frac{\partial f(\Box^{-1}R)}{\partial(\Box^{-1}R)}$.  The first piece of Eq.~\eqref{8} in the last line, from the identity~\eqref{5.5}, gives
\begin{multline}\label{9}
\frac{1}{2\chi}\int_{\Omega}  d^{4}x\, \sqrt{-g}Rf^{\prime}\Box^{-1}[\delta R]=\frac{1}{2\chi}\int_{\Omega}  d^{4}x\, \sqrt{-g}\,\Box^{-1}[Rf^{\prime}]\delta R\\
=\frac{1}{2\chi}\int_{\Omega}  d^{4}x\, \biggl\{\sqrt{-g}\,\Box^{-1}[Rf^{\prime}]R_{\mu\nu}\delta g^{\mu\nu}+\sqrt{-g}(g_{\mu\nu}\Box-\nabla_{\mu}\nabla_{\nu})\Box^{-1}[Rf^{\prime}]\delta g^{\mu\nu}\\
+\sqrt{-g}\nabla_{\sigma}\Bigl[\left(g_{\mu\nu}\nabla^{\sigma}\delta g^{\mu\nu}-\nabla_{\lambda}\delta g^{\sigma\lambda}\right)\Box^{-1}[Rf^{\prime}]-\left(g^{\lambda\sigma}g_{\mu\nu}\delta g^{\mu\nu}-\delta g^{\lambda\sigma}\right)\nabla_{\lambda}\Box^{-1}[Rf^{\prime}]\Bigr]\biggr\}\ .
\end{multline} 
While the second  piece of Eq.~\eqref{8} in the last line, by means of the  d'Alembert operator~\eqref{10}
and from Eq.~\eqref{5.5}, yields
\begin{multline}\label{11}
\frac{1}{2\chi}\int_{\Omega}  d^{4}x\,\bigl[ -\sqrt{-g}Rf^{\prime}\Box^{-1}\delta(\Box)\Box^{-1}R\bigr]\\
=\frac{1}{2\chi}\int_{\Omega}  d^{4}x\,\bigl[ -\sqrt{-g}\Box^{-1}[Rf^{\prime}]\delta(\Box)\Box^{-1}R\bigr]\\
=\frac{1}{2\chi}\int_{\Omega}  d^{4}x\,\Biggl[ -\sqrt{-g}\,\Box^{-1}[Rf^{\prime}]\delta\left(\frac{1}{\sqrt{-g}}\right)\partial_{\sigma}\bigl(\sqrt{-g}g^{\sigma\lambda}\partial_{\lambda}\bigr)\Box^{-1}R\\
-\sqrt{-g}\,\Box^{-1}[Rf^{\prime}]\frac{1}{\sqrt{-g}}\partial_{\sigma}\bigl(\delta\left(\sqrt{-g}g^{\sigma\lambda}\right)\partial_{\lambda}\bigr)\Box^{-1}R\Biggr]\\
=\frac{1}{2\chi}\int_{\Omega}  d^{4}x\, \Biggl\{\sqrt{-g}\biggl[-\frac{1}{2}g_{\mu\nu}R\,\Box^{-1}[Rf^{\prime}]\delta g^{\mu\nu}\biggr]+\partial_{\sigma}\left(\Box^{-1}[Rf^{\prime}]\right)\partial_{\lambda}\left(\Box^{-1}R\right)\delta\left(\sqrt{-g}g^{\sigma\lambda}\right)\\
-\partial_{\sigma}\biggl[\Box^{-1}[Rf^{\prime}]\partial_{\lambda}\left(\Box^{-1}R\right)\delta\left(\sqrt{-g}g^{\sigma\lambda}\right)\biggr]\Biggr\}\ .
\end{multline}
According to Eqs.~\eqref{3.1}, \eqref{4}, \eqref{9}, \eqref{11} and the following relation
\begin{equation}\label{12}
\delta\left(\sqrt{-g}g^{\sigma\lambda}\right)=\sqrt{-g}\left(\delta_{\mu}^{(\sigma}\delta_{\nu}^{\lambda)}-\frac{1}{2}g_{\mu\nu}g^{\sigma\lambda}\right)\ ,
\end{equation}
the variation of the gravitational action~\eqref{1} can be written as follows  
\begin{multline}\label{13}
\tilde{\delta}S_{g}=\frac{1}{2\chi}\int_{\Omega}  d^{4}x\, \sqrt{-g}\Biggl\{\biggl\{G_{\mu\nu}+\Bigl(G_{\mu\nu}+g_{\mu\nu}\Box-\nabla_{\mu}\nabla_{\nu}\Bigr)\Bigl[f+\Box^{-1}[Rf^{\prime}]\Bigr]\\
+\biggl[\delta_{\mu}^{(\sigma}\delta_{\nu}^{\lambda)}-\frac{1}{2}g_{\mu\nu}g^{\sigma\lambda}\biggr]\partial_{\sigma}\left(\Box^{-1}[Rf^{\prime}]\right)\partial_{\lambda}\left(\Box^{-1}R\right)\biggl\}\delta g^{\mu\nu}\\
+\sqrt{-g}\nabla_{\sigma}\biggl[\left(g_{\mu\nu}\nabla^{\sigma}\delta g^{\mu\nu}-\nabla_{\lambda}\delta g^{\sigma\lambda}\right)+\left(\delta g^{\lambda\sigma}-g^{\lambda\sigma}g_{\mu\nu}\delta g^{\mu\nu}\right)\nabla_{\lambda}\left(f+\Box^{-1}[Rf^{\prime}]\right)\\
+\left(g_{\mu\nu}\nabla^{\sigma}\delta g^{\mu\nu}-\nabla_{\lambda}\delta g^{\sigma\lambda}\right)\left(f+\Box^{-1}[Rf^{\prime}]\right)\\
-\left(\delta_{\mu}^{(\sigma}\delta_{\nu}^{\lambda)}-\frac{1}{2}g_{\mu\nu}g^{\sigma\lambda}\right)\nabla_{\lambda}\left(\Box^{-1}R\right)\Box^{-1}[Rf^{\prime}]\delta g^{\mu\nu}+\bigl(R+Rf\bigr)\delta x^{\sigma}\biggr]\Biggr\}\ .
\end{multline}
From the least action principle $\delta S_{g}=0$, if field variations and its derivatives vanish on boundary, the field equations in vacuum are obtained, i.e., 
\begin{equation}\label{14}
G_{\mu\nu}+\Delta G_{\mu\nu}=0\ ,
\end{equation}
with 
\begin{multline}\label{15}
\Delta G_{\mu\nu}=\Bigl(G_{\mu\nu}+g_{\mu\nu}\Box-\nabla_{\mu}\nabla_{\nu}\Bigr)\Bigl[f+\Box^{-1}[Rf^{\prime}]\Bigr]\\
+\biggl[\delta_{\mu}^{(\sigma}\delta_{\nu}^{\lambda)}-\frac{1}{2}g_{\mu\nu}g^{\sigma\lambda}\biggr]\partial_{\sigma}\left(\Box^{-1}[Rf^{\prime}]\right)\partial_{\lambda}\left(\Box^{-1}R\right)\ ,
\end{multline}
or if we define 
\begin{equation}\label{16}
\mathcal{G}[P](x)=\left(\Box^{-1}P\right)(x)\ ,
\end{equation}
Eq.~\eqref{14} can be rewritten as
\begin{equation}\label{17}
G_{\mu\nu}+\Bigl(G_{\mu\nu}+g_{\mu\nu}\Box-\nabla_{\mu}\nabla_{\nu}\Bigr)\Bigl[f+\mathcal{G}[Rf^{\prime}]\Bigr]
+\biggl[\delta_{\mu}^{(\sigma}\delta_{\nu}^{\lambda)}-\frac{1}{2}g_{\mu\nu}g^{\sigma\lambda}\biggr]\partial_{\sigma}\left(\mathcal{G}[Rf^{\prime}]\right)\partial_{\lambda}\left(\mathcal{G}[R]\right)=0\ .
\end{equation}
We can find the field equations in presence of matter using the following  action
\begin{equation}\label{18}
S_{m}=\frac{1}{2\chi}\int_{\Omega}  d^{4}x\, \sqrt{-g}\,\mathcal{L}_{m}\ ,
\end{equation}
and imposing the stationarity of  total action, i.e., 
\begin{equation}\label{19}
\delta (S_{g}+S_{m})=0\ ,
\end{equation}
with the matter energy-momentum tensor  defined as
\begin{equation}\label{20}
T_{\mu\nu}=-\frac{2}{\sqrt{-g}}\frac{\delta \bigl(\sqrt{-g}\mathcal{L}_{m}\bigr)}{\delta g^{\mu\nu}}\ .
\end{equation}
Hence,  the field equations in presence of matter are~\cite{DW3}
\begin{equation}\label{21}
G_{\mu\nu}+\Delta G_{\mu\nu}=\chi T_{\mu\nu}\ ,
\end{equation}
or 
\begin{equation}\label{22}
G_{\mu\nu}+\Bigl(G_{\mu\nu}+g_{\mu\nu}\Box-\nabla_{\mu}\nabla_{\nu}\Bigr)\Bigl[f+\mathcal{G}[Rf^{\prime}]\Bigr]
+\biggl[\delta_{\mu}^{(\sigma}\delta_{\nu}^{\lambda)}-\frac{1}{2}g_{\mu\nu}g^{\sigma\lambda}\biggr]\partial_{\sigma}\left(\mathcal{G}[Rf^{\prime}]\right)\partial_{\lambda}\left(\mathcal{G}[R]\right)=\chi T_{\mu\nu}\ .
\end{equation}
We shall use these considerations to derive the gravitational energy-momentum pseudotensor.

\section{Gravitational energy-momentum pseudotensor in non--local gravity}\label{B}
Let us now use the Noether theorem to derive the non-local gravitational energy-momentum pseudotensor. If the infinitesimal coordinate transformations 
\begin{equation}\label{22}
x^{\prime\mu}=x^{\mu}+\delta x^{\mu}\ ,
\end{equation}
leave the gravitational action~\eqref{1} unchanged, $\tilde{\delta}S_{g}=0$, and the domain of integration $\Omega$ can be chosen arbitrarily, by means of the variation~\eqref{13} and the assumption that the metric tensor $g_{\mu\nu}$ is solution of the field equations in vacuum~\eqref{17}, we find a conserved current $J^{\sigma}$, i.e., the Noether current~\cite{Book}, which reads as
\begin{equation}\label{23}
\begin{split}
2\chi J^{\sigma}=&R\delta x^{\sigma}-\left(g^{\mu\nu}g^{\lambda\sigma}-g^{\mu\lambda}g^{\sigma\nu}\right)\nabla_{\lambda}\delta g_{\mu\nu}\\
&+\left(g^{\mu\nu}g^{\lambda\sigma}-g^{\mu\lambda}g^{\sigma\nu}\right)\nabla_{\lambda}\left(f+\Box^{-1}[Rf^{\prime}]\right)\delta g_{\mu\nu}\\
&-\left(g^{\mu\nu}g^{\lambda\sigma}-g^{\mu\lambda}g^{\sigma\nu}\right)\left(f+\Box^{-1}[Rf^{\prime}]\right)\nabla_{\lambda}\delta g_{\mu\nu}\\
&-\left(\frac{1}{2}g^{\mu\nu}g^{\lambda\sigma}-g^{\mu\lambda}g^{\sigma\nu}\right)\nabla_{\lambda}\left(\Box^{-1}R\right)\Box^{-1}[Rf^{\prime}]\delta g_{\mu\nu}+Rf\delta x^{\sigma}\ ,
\end{split}
\end{equation}
that obeys the following local continuity equation
\begin{equation}\label{24}
\partial_{\sigma}\left(\sqrt{-g}J^{\sigma}\right)=0\ .
\end{equation}
Integrating the continuity equation~\eqref{24} over a three-dimensional volume $V$  at a given time $x^{0}$, from the  Gauss theorem, we obtain
\begin{equation}\label{24.5}
\frac{d}{dx^{0}}\int_{V}  d^{3}x\, \sqrt{-g}\,J^{0}=-\int_{\partial V}  dS_{i}\, \sqrt{-g}\,J^{i}\ .
\end{equation}
If the fields with their derivatives vanish on the boundary $\partial V$, the surface integral on the right of Eq.~\eqref{24.5} vanishes, i.e.,  there is no current crossing the boundary, and we can derive the conserved Noether charge in the volume $V$, associated to symmetries~\eqref{22}
\begin{equation}\label{24.8}
Q=\int_{V}  d^{3}x\, \sqrt{-g}\,J^{0}\ .
\end{equation}
So, if we consider the one-parameter group of diffeomorphisms for the global infinitesimal translations 
\begin{equation}\label{25}
x^{\prime\mu}=x^{\mu}+\epsilon^{\mu}\ ,
\end{equation}
the local variation $\delta$ of tensor metric $g_{\mu\nu}$ becomes 
\begin{equation}\label{26}
\delta g_{\mu\nu}=g_{\mu\nu}^{\prime}(x)-g_{\mu\nu}(x)=-g_{\mu\nu,\alpha}\epsilon^{\alpha}\ .
\end{equation}
Hence, the conserved Noether current, related to the translational symmetry~\eqref{25}, becomes the  energy-momentum density of the gravitational field, while, for isolated systems, where the spacetime is asymptotically flat at spatial infinity, the conserved Noether charge becomes the energy and momentum of the gravitational field. Therefore, the translation invariance of  gravitational action, from Eq.~\eqref{23}, gives 
\begin{equation}\label{26.1}
    \tau^{\sigma}_{\phantom{\sigma}\alpha}=\tau^{\sigma\,(GR)}_{\phantom{\sigma}\alpha}+\Delta\tau^{\sigma}_{\phantom{\sigma}\alpha}\ ,
\end{equation}
where $\tau^{\sigma\,(GR)}_{\phantom{\sigma}\alpha}$ is the Einstein pseudotensor
\begin{equation}\label{26.2}
    2\chi\tau^{\sigma\,(GR)}_{\phantom{\sigma}\alpha}=R\delta^{\sigma}_{\alpha}+\left(g^{\mu\nu}g^{\lambda\sigma}-g^{\mu\lambda}g^{\sigma\nu}\right)\bigl(g_{\mu\nu,\alpha\lambda}-\Gamma^{\beta}_{\phantom{\beta}\lambda\mu} g_{\beta\nu,\alpha})\ ,
\end{equation}
while the correction $\Delta\tau^{\sigma}_{\phantom{\sigma}\alpha}$, is the {\em gravitational energy-momentum pseudotensor} of   non-local part, i.e., 
\begin{equation}\label{27}
\boxed{
\begin{aligned}
2\chi\Delta\tau^{\sigma}_{\phantom{\sigma}\alpha}=& Rf\delta_{\alpha}^{\sigma}+\left(g^{\mu\nu}g^{\lambda\sigma}-g^{\mu\lambda}g^{\sigma\nu}\right)\bigl(g_{\mu\nu,\alpha\lambda}-\Gamma^{\beta}_{\phantom{\beta}\lambda\mu} g_{\beta\nu,\alpha}\bigr)\left(f+\Box^{-1}[Rf^{\prime}]\right)\\
&-\biggl\{\left(g^{\mu\nu}g^{\lambda\sigma}-g^{\mu\lambda}g^{\sigma\nu}\right)\nabla_{\lambda}\left(f+\Box^{-1}[Rf^{\prime}]\right)\\
&-\left(\frac{1}{2}g^{\mu\nu}g^{\lambda\sigma}-g^{\mu\lambda}g^{\sigma\nu}\right)\nabla_{\lambda}\left(\Box^{-1}R\right)\Box^{-1}[Rf^{\prime}]\biggl\}g_{\mu\nu,\alpha}
\end{aligned}
}\ .
\end{equation}
The pseudotensor~\eqref{27} has been obtained taking into account that the covariant derivative of  variation for the metric tensor is 
\begin{equation}\label{28}
\nabla_{\lambda}\delta g_{\mu\nu}=\partial_{\lambda}\delta g_{\mu\nu}-\Gamma^{\alpha}_{\phantom{\alpha}\lambda\mu}\delta g_{\alpha\nu}-\Gamma^{\alpha}_{\phantom{\alpha}\lambda\nu}\delta g_{\alpha\mu}\ .
\end{equation}
The symmetry of  Levi Civita connection leads to
\begin{equation}
    \left(g^{\mu\nu}g^{\lambda\sigma}-g^{\mu\lambda}g^{\sigma\nu}\right)\Gamma^{\beta}_{\phantom{\beta}\lambda\nu} =0\ ,
\end{equation}
and the local conservation of  pseudotensor can be read as
\begin{equation}\label{29}
\partial_{\alpha}\left(\sqrt{-g}\,\tau^{\sigma}_{\phantom{\sigma}\alpha}\right)=0\ ,
\end{equation}
being
\begin{equation}\label{30}
J^{\alpha}=\tau^{\sigma}_{\phantom{\sigma}\alpha}\epsilon^{\alpha}\ .
\end{equation}
In terms of Eq.~\eqref{16}, in more compact form, one gets
\begin{equation}\label{31}
\boxed{
\begin{aligned}
2\chi\Delta\tau^{\sigma}_{\phantom{\sigma}\alpha}=& Rf\delta_{\alpha}^{\sigma}+\left(g^{\mu\nu}g^{\lambda\sigma}-g^{\mu\lambda}g^{\sigma\nu}\right)\bigl(g_{\mu\nu,\alpha\lambda}-\Gamma^{\beta}_{\phantom{\beta}\lambda\mu} g_{\beta\nu,\alpha}\bigr)\left(f+\mathcal{G}[Rf^{\prime}]\right)\\
&-\biggl\{\left(g^{\mu\nu}g^{\lambda\sigma}-g^{\mu\lambda}g^{\sigma\nu}\right)\partial_{\lambda}\left(f+\mathcal{G}[Rf^{\prime}]\right)\\
&-\left(\frac{1}{2}g^{\mu\nu}g^{\lambda\sigma}-g^{\mu\lambda}g^{\sigma\nu}\right)\partial_{\lambda}\left(\mathcal{G}[R]\right)\mathcal{G}[Rf^{\prime}]\biggl\}g_{\mu\nu,\alpha}
\end{aligned}
}\ .
\end{equation}
It has to be emphasized that,from Eqs.~\eqref{26.1}, \eqref{26.2} and \eqref{27}, it is clear that the geometric object $\tau^{\sigma}_{\phantom{\sigma}\alpha}$ is a pseudotensor not a tensor. In other words, it transforms like a tensor under affine transformations but not under generic transformations.  So $\tau^{\sigma}_{\phantom{\sigma}\alpha}$ is at least an affine tensor. In an asymptotically flat spacetime the tensoriality is recovered and the integral~\eqref{24.8} returns to being a four-vector for asymptotic linear coordinates, that is,
\begin{equation}
P^{\alpha}=\int_{V}  d^{3}x\, \sqrt{-g}\,\tau^{\alpha}_{\phantom{\alpha}0}\ ,
\end{equation}
represents the energy and momentum in $V$ of the gravitational field.
Moreover the pseudotensor $\tau^{\sigma}_{\phantom{\sigma}\alpha}$ is a non-local object because it involves non-local terms, such as $\Box^{-1}R$ or $\Box^{-1}[Rf^{\prime}]$, whose value depends on the values assumed by the metric in the integration domain.

\section{The energy-momentum complex}\label{C}
The stationarity of  gravitational action, $\tilde{\delta}S_{g}=0$, with respect to the variation $\tilde{\delta}$,  from Eqs.~\eqref{13}, \eqref{14}, \eqref{15}, \eqref{26.1} and \eqref{27}, gives 
\begin{equation}\label{32}
\frac{1}{2\chi}\sqrt{-g}\,\bigl(G_{\mu\nu}+\Delta G_{\mu\nu}\bigr)\delta g^{\mu\nu}+\partial_{\sigma}\bigl(\sqrt{-g}\tau^{\sigma}_{\phantom{\sigma}\alpha}\epsilon^{\alpha}\bigr)=0\ .
\end{equation}
Hence, inserting the field equations in presence of  matter~\eqref{21} into Eq.~\eqref{32}, we get
\begin{equation}\label{33}
-\frac{1}{2}\sqrt{-g}\,T^{\mu\nu}\delta g_{\mu\nu}+\partial_{\sigma}\bigl(\sqrt{-g}\tau^{\sigma}_{\phantom{\sigma}\alpha}\epsilon^{\alpha}\bigr)=0\ .
\end{equation}
From rigid translations and coordinates  independence from  $\epsilon^{\alpha}$, it yields 
\begin{equation}\label{34}
\frac{1}{2}\sqrt{-g}\,T^{\mu\nu} g_{\mu\nu,\alpha}+\partial_{\sigma}\bigl(\sqrt{-g}\tau^{\sigma}_{\phantom{\sigma}\alpha}\bigr)=-\sqrt{-g}\nabla_{\sigma}T^{\sigma}_{\phantom{\sigma}\alpha}+\partial_{\sigma}\bigl(\sqrt{-g}T^{\sigma}_{\phantom{\sigma}\alpha}\bigr)+\partial_{\sigma}\bigl(\sqrt{-g}\tau^{\sigma}_{\phantom{\sigma}\alpha}\bigr)\ ,
\end{equation}
where  the identity 
\begin{equation}\label{35}
\sqrt{-g}\nabla_{\sigma}T^{\sigma}_{\phantom{\sigma}\alpha}=\partial_{\sigma}\bigl(\sqrt{-g}T^{\sigma}_{\phantom{\sigma}\alpha}\bigr)-\frac{1}{2}\sqrt{-g}\, g_{\mu\nu,\alpha}T^{\mu\nu}\ ,
\end{equation}
has been taken into account.
From Eq.~\eqref{34}, we obtain 
\begin{equation}\label{36}
\partial_{\sigma}\Bigl[\sqrt{-g}\bigl(T^{\sigma}_{\phantom{\sigma}\alpha}+\tau^{\sigma}_{\phantom{\sigma}\alpha}\bigr)\Bigr]=\sqrt{-g}\nabla_{\sigma}T^{\sigma}_{\phantom{\sigma}\alpha}\ .
\end{equation}
According to the previous considerations, it is possible to  prove  generalized contracted Bianchi identities for non-local gravity~\cite{KOIV1,KOIV2,KOIV3}. They guarantee the conservation of  energy--momentum  complex of gravitational and matter components.  Let us first demonstrate a  lemma useful for our purpose.
\begin{lem}\label{36.0}
Let $f\in C^{2}(\Omega)$ be a  twice continuously differentiable function on an open set $\Omega$ of $\mathbb{R}^{4}$, $\nabla$ be the covariant derivative, $\Box$ be the d'Alembert operator and $[,]$ be the commutator, we have 
\begin{equation}\label{36.1}
[\nabla_{\nu},\Box]f=-R_{\mu\nu}\nabla^{\mu}f\ .
\end{equation}
\end{lem}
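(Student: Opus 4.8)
The plan is to prove the commutator identity $[\nabla_\nu,\Box]f=-R_{\mu\nu}\nabla^\mu f$ by a direct computation, carefully commuting covariant derivatives on the scalar $f$ and its derivatives, using the Ricci identity to convert the resulting curvature terms. Since $f$ is a scalar, $\nabla_\mu f=\partial_\mu f$ is a $(0,1)$-tensor, $\nabla_\mu\nabla_\nu f$ is symmetric in $\mu,\nu$, and $\Box f=g^{\mu\nu}\nabla_\mu\nabla_\nu f$.

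First I would write out both orderings explicitly. On the one hand, $\nabla_\nu\Box f=\nabla_\nu\bigl(g^{\alpha\beta}\nabla_\alpha\nabla_\beta f\bigr)=g^{\alpha\beta}\nabla_\nu\nabla_\alpha\nabla_\beta f$, using metric compatibility $\nabla_\nu g^{\alpha\beta}=0$. On the other hand, $\Box\nabla_\nu f=g^{\alpha\beta}\nabla_\alpha\nabla_\beta\nabla_\nu f$. Subtracting, the difference is $g^{\alpha\beta}\bigl(\nabla_\nu\nabla_\alpha-\nabla_\alpha\nabla_\nu\bigr)\nabla_\beta f$ plus a term measuring the failure of $\nabla_\alpha$ and $\nabla_\nu$ to commute in the remaining slot; more carefully, I would insert an intermediate term $g^{\alpha\beta}\nabla_\alpha\nabla_\nu\nabla_\beta f$ and split the difference as
\begin{equation*}
[\nabla_\nu,\Box]f=g^{\alpha\beta}\bigl([\nabla_\nu,\nabla_\alpha]\nabla_\beta f\bigr)+g^{\alpha\beta}\nabla_\alpha\bigl([\nabla_\nu,\nabla_\beta]f\bigr)\ .
\end{equation*}
The second commutator vanishes since $[\nabla_\nu,\nabla_\beta]$ acting on the scalar $f$ is zero. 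For the first, the Ricci identity on the one-form $\omega_\beta=\nabla_\beta f$ gives $[\nabla_\nu,\nabla_\alpha]\nabla_\beta f=-R^{\rho}_{\ \beta\nu\alpha}\nabla_\rho f$ (with the paper's curvature sign conventions, which I would fix to match those implicit in Theorem~\ref{3.5} and the Einstein tensor definition).

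Then I would contract with $g^{\alpha\beta}$: $g^{\alpha\beta}R^{\rho}_{\ \beta\nu\alpha}\nabla_\rho f$. Raising and lowering, $g^{\alpha\beta}R^{\rho}_{\ \beta\nu\alpha}=R^{\rho\alpha}_{\ \ \nu\alpha}$, and contracting the first and last indices of the Riemann tensor yields (up to sign, fixed by the Ricci tensor convention $R_{\mu\nu}=R^{\alpha}_{\ \mu\alpha\nu}$) exactly $R^\rho_{\ \nu}$, so the whole expression collapses to $-R^\rho_{\ \nu}\nabla_\rho f=-R_{\mu\nu}\nabla^\mu f$, which is the claim.

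The only real subtlety — the ``hard part,'' such as it is — is bookkeeping of curvature sign conventions: the statement fixes a definite sign ($-R_{\mu\nu}\nabla^\mu f$), so I must ensure the Ricci identity and the Riemann-to-Ricci contraction are written with the conventions consistent with the rest of the paper, rather than an off-by-sign variant. Once the conventions are pinned down, the derivation is a two-line application of the Ricci identity followed by a single metric contraction; no integration, boundary terms, or the $\Box^{-1}$ machinery are needed here, so I would keep the proof short and purely algebraic.
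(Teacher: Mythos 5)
Your proof is correct and follows essentially the same route as the paper: split $[\nabla_\nu,\Box]f$ into two commutators, discard the one acting on the scalar $f$, and apply the Ricci identity to the gradient (the paper uses the contravariant vector $\nabla^{\gamma}f$ where you use the one-form $\nabla_{\beta}f$, which is equivalent by metric compatibility). Your attention to the curvature sign conventions is appropriate and consistent with the paper's conclusion $[\nabla_{\nu},\Box]f=-R_{\mu\nu}\nabla^{\mu}f$.
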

\begin{proof}
From the commutator of two covariant derivatives $\nabla_{\mu}$ and $\nabla_{\nu}$, which acts on the contravariant  vector field $A^{\gamma}$, we get
\begin{equation}\label{36.2}
[\nabla_{\mu},\nabla_{\nu}]A^{\gamma}=R^{\gamma}_{\phantom{\gamma}\lambda\mu\nu}A^{\lambda}\ .
\end{equation}
If we set $A^{\gamma}=\nabla^{\gamma}f$ and $\gamma=\nu$ in Eq.~\eqref{36.2}, we obtain 
\begin{multline}\label{36.3}
[\nabla_{\mu},\Box]f=\nabla_{\mu}\nabla_{\nu}\nabla^{\nu}f-\nabla_{\nu}\nabla^{\nu}\nabla_{\mu}f\\
=\nabla^{\nu}[\nabla_{\mu},\nabla_{\nu}]f-[\nabla_{\mu},\nabla_{\nu}]\nabla^{\nu}f=\nabla^{\nu}[\nabla_{\mu},\nabla_{\nu}]f-R_{\mu\nu}\nabla^{\nu}f \ .
\end{multline}
Thus, the commutativity of covariant derivatives of a function, that is,
\begin{equation}
[\nabla_{\mu},\nabla_{\nu}]f=0\ ,
\end{equation}
inserted into Eq.~\eqref{36.3}, gives us the result~\eqref{36.1}.
\end{proof}
\begin{theorem}[Non-local generalized contracted Bianchi identities]
Let $G_{\mu\nu}$ be the Einstein tensor and $\Delta G_{\mu\nu}$ be the corrections to the field equations due to non-local terms as in  Eq.~\eqref{21}, then the covariant 4-divergence of their sum vanishes, i.e.,
\begin{equation}\label{37}
\nabla^{\mu}\bigl(G_{\mu\nu}+\Delta G_{\mu\nu}\bigr)=0\ .
\end{equation}
\end{theorem}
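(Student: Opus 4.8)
The plan is to isolate the purely non-local part. Since the Levi-Civita connection satisfies the ordinary contracted Bianchi identity $\nabla^{\mu}G_{\mu\nu}=0$, it suffices to prove $\nabla^{\mu}\Delta G_{\mu\nu}=0$ for $\Delta G_{\mu\nu}$ as in Eq.~\eqref{15}. Set $\Phi:=\mathcal{G}[Rf^{\prime}]$, $\Theta:=\mathcal{G}[R]$ and $\Psi:=f+\Phi$; these are ordinary (albeit non-local) scalar fields, so covariant and partial derivatives coincide on them. I would split $\Delta G_{\mu\nu}$ into the second-order piece $(G_{\mu\nu}+g_{\mu\nu}\Box-\nabla_{\mu}\nabla_{\nu})\Psi$ and the gradient--gradient piece $\nabla_{(\mu}\Phi\,\nabla_{\nu)}\Theta-\tfrac12 g_{\mu\nu}\nabla^{\lambda}\Phi\,\nabla_{\lambda}\Theta$, and take the divergence of each separately.

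For the second-order piece, using $\nabla^{\mu}G_{\mu\nu}=0$, metric compatibility, and $\nabla^{\mu}\nabla_{\mu}\nabla_{\nu}\Psi=\Box\nabla_{\nu}\Psi$, one finds $\nabla^{\mu}\!\big[(G_{\mu\nu}+g_{\mu\nu}\Box-\nabla_{\mu}\nabla_{\nu})\Psi\big]=G_{\mu\nu}\nabla^{\mu}\Psi+[\nabla_{\nu},\Box]\Psi$. Lemma~\eqref{36.0} replaces the commutator by $-R_{\mu\nu}\nabla^{\mu}\Psi$, and since $G_{\mu\nu}-R_{\mu\nu}=-\tfrac12 g_{\mu\nu}R$, the whole piece collapses to $-\tfrac12 R\,\nabla_{\nu}\Psi$.

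For the gradient--gradient piece, I would expand $\nabla^{\mu}$ by the Leibniz rule and use the symmetry of the covariant Hessian of a scalar, $\nabla_{\mu}\nabla_{\nu}\Phi=\nabla_{\nu}\nabla_{\mu}\Phi$ and likewise for $\Theta$; the four ``cross'' terms then cancel in pairs after relabelling dummy indices, leaving only $\tfrac12(\Box\Phi)\nabla_{\nu}\Theta+\tfrac12\nabla_{\nu}\Phi\,(\Box\Theta)$. By Theorem~\eqref{2.3}, $\Box\Phi=\Box\,\mathcal{G}[Rf^{\prime}]=Rf^{\prime}$ and $\Box\Theta=\Box\,\mathcal{G}[R]=R$, so this equals $\tfrac12 R\big(f^{\prime}\nabla_{\nu}\Theta+\nabla_{\nu}\Phi\big)$. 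Since $f$ is analytic in $\Box^{-1}R=\Theta$ with $f^{\prime}=\partial f/\partial\Theta$, the chain rule gives $f^{\prime}\nabla_{\nu}\Theta=\nabla_{\nu}f$, hence $\tfrac12 R\,\nabla_{\nu}(f+\Phi)=\tfrac12 R\,\nabla_{\nu}\Psi$, which exactly cancels the second-order piece. Therefore $\nabla^{\mu}\Delta G_{\mu\nu}=0$, and adding $\nabla^{\mu}G_{\mu\nu}=0$ yields Eq.~\eqref{37}.

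The main obstacle I anticipate is the bookkeeping in the gradient--gradient block: one must carefully track the contracted indices and confirm that the off-diagonal Hessian terms really do cancel. A secondary point requiring care is the commutator step, where Lemma~\eqref{36.0} must be applied with the correct index placement and one must remember that $\nabla_{\mu}$ acts on the non-local scalars $\Phi,\Theta,\Psi$ exactly as an ordinary partial derivative; the remaining ingredients ($\nabla^{\mu}G_{\mu\nu}=0$, $\Box\Box^{-1}=\mathbb{1}$ from Theorem~\eqref{2.3}, and the scalar chain rule) are then routine.
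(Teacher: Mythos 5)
Your proposal is correct and follows essentially the same route as the paper: take $\nabla^{\mu}\Delta G_{\mu\nu}$, use $\nabla^{\mu}G_{\mu\nu}=0$, Lemma~\eqref{36.0} for the commutator $[\nabla_{\nu},\Box]$, the symmetry of the scalar Hessian to cancel the cross terms in the gradient--gradient block, $\Box\Box^{-1}=\mathbb{1}$ from Theorem~\eqref{2.3}, and the chain rule $\nabla_{\nu}f=f^{\prime}\nabla_{\nu}\Box^{-1}R$. The only cosmetic difference is bookkeeping: you collapse the second-order block to $-\tfrac12 R\,\nabla_{\nu}(f+\Box^{-1}[Rf^{\prime}])$ before cancelling, whereas the paper keeps $G_{\mu\nu}\nabla^{\mu}(f+\Box^{-1}[Rf^{\prime}])$ explicit and recombines $-R_{\mu\nu}+\tfrac12 g_{\mu\nu}R=-G_{\mu\nu}$ at the end — the same algebra arranged differently.
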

\begin{proof}We carry out the 4-divergence of Eq.~\eqref{15} and we have 
\begin{multline}\label{37.1}
\nabla^{\mu}\Delta G_{\mu\nu}=\bigl(\nabla^{\mu}G_{\mu\nu}+\nabla_{\nu}\Box-\Box\nabla_{\nu}\bigr)\bigl(f+\Box^{-1}[Rf^{\prime}]\bigr)+G_{\mu\nu}\nabla^{\mu}(f+\Box^{-1}[Rf^{\prime}])\\
+\frac{1}{2}\Bigl(\delta_{\nu}^{\lambda}\nabla^{\sigma}+\delta_{\nu}^{\sigma}\nabla^{\lambda}-g^{\sigma\lambda}\nabla_{\nu}\Bigr)\nabla_{\sigma}\Box^{-1}[Rf^{\prime}]\nabla_{\lambda}\Box^{-1}R\ .
\end{multline}
So, from the contracted Bianchi identities 
\begin{equation}\label{37.12}
\nabla^{\mu}G_{\mu\nu}=0\ ,
\end{equation}
and performing some calculations,  Eq.~\eqref{37.1} can be rewritten as follows
\begin{multline}\label{37.2}
\nabla^{\mu}\Delta G_{\mu\nu}=[\nabla_{\nu},\Box]\bigl(f+\Box^{-1}[Rf^{\prime}]\bigr)+G_{\mu\nu}\nabla^{\mu}(f+\Box^{-1}[Rf^{\prime}])\\
+\frac{1}{2}\Bigl(\Box\Box^{-1}[Rf^{\prime}]\nabla_{\nu}\Box^{-1}R+\nabla_{\sigma}\Box^{-1}[Rf^{\prime}]\nabla^{\sigma}\nabla_{\nu}\Box^{-1}R+\nabla^{\sigma}\nabla_{\nu}\Box^{-1}[Rf^{\prime}]\nabla_{\sigma}\Box^{-1}R\\
+\nabla_{\nu}\Box^{-1}[Rf^{\prime}]\Box\Box^{-1}R-\nabla_{\nu}\nabla^{\sigma}[Rf^{\prime}]\nabla_{\sigma}\Box^{-1}R-\nabla_{\sigma}\Box^{-1}[Rf^{\prime}]\nabla_{\nu}\nabla^{\sigma}\Box^{-1}R\Bigr)\ .
\end{multline}
Now,  the relation~\eqref{37.2} and  the lemma~\eqref{36.0} lead to  Eq.~\eqref{37}, that is, we find  
\begin{multline}\label{37.3}
\nabla^{\mu}\Delta G_{\mu\nu}=-R_{\mu\nu}\bigl(f+\Box^{-1}[Rf^{\prime}]\bigr)+G_{\mu\nu}\nabla^{\mu}(f+\Box^{-1}[Rf^{\prime}])\\
+\frac{1}{2}Rf^{\prime}\nabla_{\nu}\Box^{-1}R+\frac{1}{2}R\nabla_{\nu}\Box^{-1}[Rf^{\prime}]\\
=-R_{\mu\nu}f^{\prime}\nabla^{\mu}\Box^{-1}R-R_{\mu\nu}\nabla^{\mu}\Box^{-1}[Rf^{\prime}]+G_{\mu\nu}\nabla^{\mu}(f+\Box^{-1}[Rf^{\prime}])\\
+\frac{1}{2}g_{\mu\nu}Rf^{\prime}\nabla^{\mu}\Box^{-1}R+\frac{1}{2}g_{\mu\nu}R\nabla^{\mu}\Box^{-1}[Rf^{\prime}]\\
=-G_{\mu\nu}\nabla^{\mu}\bigl(f+\Box^{-1}[Rf^{\prime}]\bigr)+G_{\mu\nu}\nabla^{\mu}(f+\Box^{-1}[Rf^{\prime}])=0\ .
\end{multline}
\end{proof}
According to the field equation in presence of matter~\eqref{21},   Eq.~\eqref{37} leads to the standard covariant conservation of matter energy-momentum tensor, that is, 
\begin{equation}\label{39}
\nabla_{\mu} T^{\mu\nu}=0\ .
\end{equation}
It implicitly defines the trajectories  of particles,  that is,  the time-like metric geodesics on the spacetime manifold.
Finally,  Eq.~\eqref{36} gives the local conservation of energy-momentum complex $\mathcal{T}^{\sigma}_{\phantom{\sigma}\alpha}$ in non-local gravity, that is, the {\em continuity equation for energy-momentum complex in non-local gravity}
\begin{equation}\label{43}
\boxed{
\partial_{\sigma}\Bigl[\sqrt{-g}\bigl(T^{\sigma}_{\phantom{\sigma}\alpha}+\tau^{\sigma}_{\phantom{\sigma}\alpha}\bigr)\Bigr]=0
}\ .
\end{equation}
We can define 
\begin{equation}\label{44}
\mathcal{T}^{\sigma}_{\phantom{\sigma}\alpha}=T^{\sigma}_{\phantom{\sigma}\alpha}+\tau^{\sigma}_{\phantom{\sigma}\alpha}\ ,
\end{equation}
involving all gravitational and matter contributions.

\section{Weak field limit of  non-local gravitaty energy-momentum pseudotensor}\label{D}
Let us now  develop the  low energy limit perturbing the metric tensor $g_{\mu\nu}$ around the Minkowskian metric $\eta_{\mu\nu}$. It is 
\begin{equation}\label{45}
    g_{\mu\nu}=\eta_{\mu\nu}+h_{\mu\nu}\ ,
\end{equation}
and then, we can calculate the pseudotensor~\eqref{27} or \eqref{31} to lowest order in the perturbation $h_{\mu\nu}$, that is, up to second ordear in $h_{\mu\nu}$. Therefore we get, at the  order $h^{2}$,
\begin{equation}
    \left(\tau^{\sigma}_{\phantom{\sigma}\alpha}\right)^{(2)}=\left(\tau^{\sigma\,(GR)}_{\phantom{\sigma)}\alpha}\right)^{(2)}+\left(\Delta\tau^{\sigma}_{\phantom{\sigma}\alpha}\right)^{(2)}\ ,
\end{equation}
where the Einstein pseudo-tensor is
\begin{equation}
    2\chi \left(\tau^{\sigma\, (GR)}_{\phantom{\sigma}\alpha}\right)^{(2)}=R^{(2)}\delta_{\alpha}^{\sigma}+\left(g^{\mu\nu}g^{\lambda\sigma}-g^{\mu\lambda}g^{\sigma\nu}\right)^{(1)}g_{\mu\nu,\alpha\lambda}^{(1)}\ ,
\end{equation}
and, from Eq.~\eqref{31}, the non-local perturbation of  pseudotensor takes the  form
\begin{multline}\label{46}
    2\chi\left(\Delta\tau^{\sigma}_{\phantom{\sigma}\alpha}\right)^{(2)}=R^{(1)}f^{(1)}\delta_{\alpha}^{\sigma}+\left(g^{\mu\nu}g^{\lambda\sigma}-g^{\mu\lambda}g^{\sigma\nu}\right)^{(0)}\left(f^{(1)}+\mathcal{G}^{(1)}[Rf^{\prime}]\right)_{,\lambda}g_{\mu\nu,\alpha}^{(1)}\\
    -\left(g^{\mu\nu}g^{\lambda\sigma}-g^{\mu\lambda}g^{\sigma\nu}\right)^{(0)}\left(f^{(1)}+\mathcal{G}^{(1)}[Rf^{\prime}]\right)g_{\mu\nu,\alpha\lambda}^{(1)}\ .
\end{multline}
Then, we expand $f$ as
\begin{equation}\label{46.5}
    f\left(\mathcal{G}[R]\right)(x)=f(0)+f^{\prime}(0)\mathcal{G}[R](x)+\ldots\ ,
\end{equation}
and imposing the case $f(0)=0$,  the relation~\eqref{46.5} to the first order takes the form 
\begin{equation}
    f^{(1)}\left(\mathcal{G}[R]\right)(x)=f^{\prime}(0)\mathcal{G}^{(1)}[R](x)\ .
\end{equation}
Taking into account the following first order perturbations in a generic coordinate system, the Ricci scalar becomes
\begin{equation}\label{47}
    R^{(1)}=\left(h^{\beta\gamma}_{\phantom{\beta\gamma},\beta\gamma}-\Box^{(0)} h\right)\ ,
\end{equation}
where 
\begin{equation}\label{47.1}
\Box^{(0)}=\eta^{\alpha\beta}\partial_{\alpha}\partial_{\beta}\ ,
\end{equation}
and the  non-local operator  $\Box^{-1}$ at first order reads as 
\begin{equation}\label{48}
    \mathcal{G}^{(1)}[R](x)=\left(\Box^{-1}R\right)^{(1)}(x)=-h(x)+\widetilde{\mathcal{G}}\left[h^{\beta\gamma}_{\phantom{\beta\gamma},\beta\gamma}\right](x)\ ,
\end{equation}
where 
\begin{equation}\label{49}
    \widetilde{\mathcal{G}}\left[h^{\beta\gamma}_{\phantom{\beta\gamma},\beta\gamma}\right](x)=\int_{\Omega}  d^{4}x^{\prime}G(x,x^{\prime})\,h^{\beta\gamma}_{\phantom{\beta\gamma},\beta\gamma}(x^{\prime})\ .
\end{equation}
We have to  prove the identity~\eqref{48}.   Using  Eqs.~\eqref{2.1}, \eqref{2}, \eqref{47} and the theorem~\eqref{3.5}, it is 
\begin{multline}\label{50}
    \left(\Box^{-1}R\right)^{(1)}(x)=\int_{\Omega^{\prime}}  d^{4}x^{\prime}\sqrt{-g(x^{\prime})}^{(0)}G(x,x^{\prime}) R^{(1)}(x^{\prime})\\
    =\int_{\Omega}  d^{4}x^{\prime}\sqrt{-g(x^{\prime})}^{(0)}G(x,x^{\prime})\left(h^{\beta\gamma}_{\phantom{\beta\gamma},\beta\gamma}(x^{\prime})-\Box_{x^{\prime}} h (x^{\prime})\right)\\
    =-\int_{\Omega}  d^{4}x^{\prime}\Box_{x^{\prime}}G(x,x^{\prime})h (x^{\prime})+\int_{\Omega}  d^{4}x^{\prime}G(x,x^{\prime})\,h^{\beta\gamma}_{\phantom{\beta\gamma},\beta\gamma}(x^{\prime})\\
    =-\int_{\Omega}  d^{4}x^{\prime}\delta(x-x^{\prime})h (x^{\prime})+ \widetilde{\mathcal{G}}\left[h^{\beta\gamma}_{\phantom{\beta\gamma},\beta\gamma}\right](x)=-h(x)+\widetilde{\mathcal{G}}\left[h^{\beta\gamma}_{\phantom{\beta\gamma},\beta\gamma}\right](x)\ .
\end{multline}
Furthermore, we perform the first-order perturbation of $\mathcal{G}[Rf^{\prime}]$, namely
\begin{multline}\label{51}
    \mathcal{G}^{(1)}[Rf^{\prime}](x)=\int_{\Omega}  d^{4}\sqrt{-g(x^{\prime})}^{(0)}G(x,x^{\prime}) R^{(1)}(x^{\prime})f^{\prime (0)}[\mathcal{G}](x^{\prime})\\
    =f^{\prime}(0)\int_{\Omega}  d^{4}\sqrt{-g(x^{\prime})}^{(0)}G(x,x^{\prime}) R^{(1)}(x^{\prime})=f^{\prime}(0)   \mathcal{G}^{(1)}[R](x)\ .
\end{multline}
Finally substituting the Eqs.~\eqref{47}, \eqref{48} and \eqref{51} in the non-local perturbed gravitational energy--momentum pseudotensor~\eqref{46}, we derive the \emph{non-local corrections of the gravitational pseudo-tensor $\tau^{\sigma}_{\phantom{\sigma}\alpha}$ to the second order} in $h_{\mu\nu}$, that is,
\begin{equation}\label{52}
\boxed{
\begin{aligned}
    2\chi \left(\Delta\tau^{\sigma}_{\phantom{\sigma}\alpha}\right)^{(2)}=\biggl\{\Bigl(h^{\beta\gamma}_{\phantom{\beta\gamma},\beta\gamma}&-\Box h\Bigr)\Bigl(-h+\widetilde{\mathcal{G}}\left[h^{\beta\gamma}_{\phantom{\beta\gamma},\beta\gamma}\right]\Bigr)\delta^{\sigma}_{\alpha}\\
    &+2\left(\eta^{\mu\nu}\eta^{\lambda\sigma}-\eta^{\mu\lambda}\eta^{\nu\sigma}\right)\Bigl(-h+\widetilde{\mathcal{G}}\left[h^{\beta\gamma}_{\phantom{\beta\gamma},\beta\gamma}\right]\Bigr)_{,\lambda}h_{\mu\nu,\alpha}\\
   & \qquad -2\left(\eta^{\mu\nu}\eta^{\lambda\sigma}-\eta^{\mu\lambda}\eta^{\nu\sigma}\right)\Bigl(-h+\widetilde{\mathcal{G}}\left[h^{\beta\gamma}_{\phantom{\beta\gamma},\beta\gamma}\right]\Bigr)h_{\mu\nu,\alpha\lambda}\biggr\}f^{\prime}(0)
\end{aligned}
}\ .
\end{equation}
The non-local contribution in Eq.~\eqref{52} is evident and, as discussed in Refs.~\cite{CAPRIOLOM, CCCQG2021,CCN}, it can contribute to gravitational radiation representing a signature for non-local gravity.

\section{Discussion and Conclusions}\label{E}
 In this paper, we investigated how  non-locality  gravity induces correction terms $\Delta\tau^{\sigma}_{\phantom{\sigma}\alpha}$ into the Einstein gravitational  pseudotensor. Considering the Noether theorem and imposing the invariance of  gravitational action under rigid translations, we found the associated conserved Noether current and charge. They can be interpreted as the gravitational density of the energy-momentum and the energy and momentum of gravitational field present in a spatial volume enclosing localized massive objects. The density and flux density of the gravitational energy and momentum expressed in Eq.~\eqref{27} are not described by a covariant tensor, which means that, under general coordinate transformations, it does not transform like a tensor. The geometrical object~\eqref{27} is an affine tensor or pseudotensor because it transforms like a tensor only under affine transformations.  The non-tensorial character of  Eq.~\eqref{27} is closely linked to the non-localization of gravitational energy which holds also  in non-local gravity. The non-locality of the gravitational pseudotensor intervenes through  integral operators, like $\Box^{-1}$,  where its value, at a given point $x$,  takes into account the value assumed by the fields in other points $x^{\prime}$ of the spacetime. Then,  by generalizing the contracted Bianchi identities to the non-local gravity, we have obtained an equation of continuity for the energy-momentum complex that ensures its local conservation. Finally, we studied the behavior at low energies of the non-local corrections of the gravitational pseudotensor~\eqref{52}, expanding it up to the second order in $h_{\mu\nu}$.  The non-local gravitational energy-momentum pseudotensor is a crucial physical quantity because, thanks to the gravitational waves obtained and analyzed in the papers~\cite{CAPRIOLOM, CCCQG2021, CCN},  it is possible to calculate the power emitted by a radiative system and transported by the waves with all its polarizations and multipole terms.  The presence, in the gravitational radiation, of a scalar component with lower multipoles, in addition to the standard quadrupole tensor component,  can be investigated thanks to  the gravitational pseudotensor. In this perspective,  it can give  a relevant signature for the non-local gravity.  In a forthcoming paper, we will investigate possible observational constraints on these features.

 \section*{Acknowledgements}
 This paper is based upon work from COST Action CA21136 {\it Addressing
observational tensions in cosmology with systematics and fundamental physics} (CosmoVerse) supported by COST (European Cooperation in Science and Technology). 
 Authors acknowledge the Istituto Nazionale di Fisica Nucleare (INFN) Sez. di Napoli,  Iniziative Specifiche QGSKY and MOONLIGHT,   and the Istituto Nazionale di Alta Matematica (INdAM), gruppo GNFM.

\end{document}